\documentclass[
  aps,
  pra,
  reprint,
  superscriptaddress,
  nofootinbib,
  longbibliography,
  floatfix
]{revtex4-2}

\usepackage{amsmath,amssymb,mathtools,amsthm}
\usepackage{bm}
\usepackage{microtype}
\usepackage{tikz}
\usepackage{hyperref}
\hypersetup{hidelinks}

\theoremstyle{plain}
\newtheorem{theorem}{Theorem}

\theoremstyle{definition}

\theoremstyle{remark}

\newcommand{\F}{\mathbb{F}}
\newcommand{\R}{\mathbb{R}}

\newcommand{\Tr}{\operatorname{Tr}}
\newcommand{\rank}{\operatorname{rank}}

\newcommand{\Span}{\operatorname{span}}

\newcommand{\FKempty}{\mathcal{F}_{\overline{K_4}}}
\newcommand{\FCfour}{\mathcal{F}_{C_4}}

\newcommand{\AnticommutationGraphs}{%
\begin{tikzpicture}[
  x=1cm,y=1cm,line width=0.65pt,
  vertex/.style={circle,fill=black,inner sep=0pt,minimum size=3.4pt},
  every node/.style={font=\small},
  every label/.style={font=\small,inner sep=1pt}]
  \useasboundingbox (-0.40,-0.85) rectangle (7.70,2.15);
  \begin{scope}[xshift=0.15cm]
    \node (titlea) at (0.65,1.94) {\phantom{$K_3\sqcup K_1$}};
    \node at (0.65,1.94) {$\overline{K_4}$};
    \node[anchor=east,xshift=-4pt] at (titlea.west) {(a)};
    \node[vertex,label=above:$u_1$] at (0,1.15) {};
    \node[vertex,label=above:$u_2$] at (1.3,1.15) {};
    \node[vertex,label=below:$u_3$] at (1.3,0) {};
    \node[vertex,label=below:$u_4$] at (0,0) {};
    \node at (0.65,-0.72) {$|E|=0$};
  \end{scope}
  \begin{scope}[xshift=3.0cm]
    \node (titleb) at (0.65,1.94) {\phantom{$K_3\sqcup K_1$}};
    \node at (0.65,1.94) {$C_4$};
    \node[anchor=east,xshift=-4pt] at (titleb.west) {(b)};
    \draw (0,1.15) -- (1.3,1.15) -- (1.3,0) -- (0,0) -- cycle;
    \node[vertex,label=above:$u_1$] at (0,1.15) {};
    \node[vertex,label=above:$u_2$] at (1.3,1.15) {};
    \node[vertex,label=below:$u_3$] at (1.3,0) {};
    \node[vertex,label=below:$u_4$] at (0,0) {};
    \node at (0.65,-0.72) {$|E|=4$};
  \end{scope}
  \begin{scope}[xshift=5.85cm]
    \node (titlec) at (0.65,1.94) {$K_3\sqcup K_1$};
    \node[anchor=east,xshift=-4pt] at (titlec.west) {(c)};
    \draw (0,1.15) -- (1.3,1.15) -- (1.3,0) -- cycle;
    \node[vertex,label=above:$u_1$] at (0,1.15) {};
    \node[vertex,label=above:$u_2$] at (1.3,1.15) {};
    \node[vertex,label=below:$u_3$] at (1.3,0) {};
    \node[vertex,label=below:$u_4$] at (0,0) {};
    \node at (0.65,-0.72) {$|E|=3$};
  \end{scope}
\end{tikzpicture}%
}

\begin{document}

\title{Quartic Certificates for Pure-State Tomography with Pauli Measurements}
\author{Hengjian Li}
\affiliation{Department of Physics, The University of Texas at Dallas, Richardson, Texas 75080, USA}
\author{Baisong Sun}
\affiliation{Department of Physics, The University of Texas at Dallas, Richardson, Texas 75080, USA}
\author{Bei Zeng}
\affiliation{Department of Physics, The University of Texas at Dallas, Richardson, Texas 75080, USA}

\begin{abstract}
Full Pauli-basis tomography of an arbitrary $n$-qubit state uses the expectation values of all nonidentity Pauli observables. For multiqubit systems, prior knowledge that the state is pure allows some of these observables to be omitted while retaining uniqueness. We study two resulting notions of uniqueness: whether every pure state is uniquely determined among pure states (UDP), and whether every pure state is uniquely determined among all states, including mixed states (UDA). We derive quartic uniqueness certificates from the fourth elementary symmetric polynomial of operators in the real span of the omitted Pauli observables. An exact quartic identity shows that, for $n$-qubit systems with $n\geq 3$, every pure state is UDA whenever the omitted Pauli observables contain no four distinct operators that commute pairwise and multiply to $\pm I$. For three qubits, this condition is also necessary for UDP, yielding an exact characterization of both UDP and UDA. We further prove that UDP and UDA coincide for any number of qubits whenever the omitted Pauli observables commute pairwise. These results provide directly checkable structural criteria for certifying pure-state tomography from restricted Pauli measurements.
\end{abstract}

\maketitle

\section{Introduction}
\label{sec:intro}

Quantum-state tomography connects the preparation of a quantum system to
an experimentally testable description of its state \cite{Prugovecki1977}. Reliable state
characterization is essential for assessing quantum devices
\cite{ParisRehacek2004,EisertEtAl2020,SosaMartinezEtAl2017}, but determining a general state of
$n$ qubits requires information about $4^n-1$ independent real parameters.
Prior knowledge that the state is pure can substantially reduce the
measurement requirements \cite{Weigert1992,AmietWeigert1999,FlammiaSilberfarbCaves2005,HeinosaariMazzarellaWolf2013,KechVranaWolf2015,GoyenecheEtAl2015,KechWolf2017}. Pauli
observables are particularly useful because their expectation values can
be obtained from local spin measurements. This raises a basic design
question: which Pauli observables can remain unmeasured while the
retained data still identify every pure state? We call the unmeasured
Pauli observables \emph{omitted} and the measured ones \emph{retained}.

The meaning of ``identify'' matters. A pure state is \emph{uniquely
determined among pure states} (UDP) if no other pure state produces the
same data, and \emph{uniquely determined among all states} (UDA) if no
other density operator, pure or mixed, does so \cite{ChenEtAl2013}.
UDA therefore excludes a mixed-state explanation of the measured data.
It also permits reconstruction over the convex set of density operators
without imposing a purity constraint: for suitable measurements, positivity
and the data already single out the state
\cite{KalevKosutDeutsch2015,BaldwinDeutschKalev2016,CarmeliEtAl2016}.
We require these uniqueness properties to hold for \emph{every} pure
state. This universal requirement is stronger than identifying generic
pure states while allowing exceptional ambiguities
\cite{Finkelstein2004,CarmeliEtAl2014,CarmeliEtAl2015}.

Pauli tomography has been studied from both reconstruction and
measurement-design perspectives. Random Pauli observables support
compressed-sensing recovery of low-rank states \cite{GrossEtAl2010,FlammiaEtAl2012},
and recent work gives approximate pure-state reconstruction with
near-optimal copy and time complexity using nonadaptive Pauli product-basis
measurements \cite{GrewalEtAl2026}. 
For exact universal UDA, Ma et al.\ established the optimal
counts of $11$ Pauli observables for two qubits and $31$ for three
qubits, including the identity in both cases \cite{MaEtAl2016}.
The three-qubit analysis used failing
sets studied in Jackson's thesis \cite{Jackson2013} and a detailed
spectral calculation for a selected measurement scheme. Variational
searches subsequently found further schemes and numerical agreement
between UDP and UDA \cite{ZhangZhuZeng2024}; optimization also provides
tests of state-dependent uniqueness \cite{WuEtAl2025}. These developments
motivate an analytic question: which structural properties of an
arbitrary set of omitted observables guarantee universal uniqueness?

Our approach uses a spectral quantity to certify this guarantee. Two
states give identical retained expectation values precisely when their
difference lies in the real span of the omitted Pauli observables.
Every operator in this span is Hermitian and traceless.
For such an operator $A$, with eigenvalues
$\lambda_1,\ldots,\lambda_d$ and $d=2^n$, define
\begin{equation}
  e_4(A)=\sum_{1\leq i<j<k<\ell\leq d}
  \lambda_i\lambda_j\lambda_k\lambda_\ell.
  \label{eq:e4-spectrum}
\end{equation}
This is the fourth elementary symmetric polynomial: the subscript $4$
counts the four eigenvalue factors in each term. As a function of the
Pauli coefficients of $A$, it is a homogeneous polynomial of degree
four, which explains ``quartic'' in the title. Its positivity provides
the certificate. A nonzero traceless Hermitian operator with $e_4(A)>0$
has at least two positive and two negative eigenvalues. It therefore
cannot be the difference between a pure state and a distinct density
operator, since such a difference has exactly one positive eigenvalue.
Proving $e_4(A)>0$ for every nonzero operator in the real span of
the omitted Pauli observables thus certifies universal UDA.

We derive an exact identity for $e_4$ that yields this positivity from
the structure of the omitted observables. Call their set
\emph{four-set-free} when it contains no four distinct observables
that commute pairwise and multiply to $\pm I$. 
For every $n\geq3$, four-set-freeness guarantees $e_4(A)>0$
for every nonzero operator $A$ in the real span of the omitted
Pauli observables, and hence universal UDA.
This gives a finite test using only Pauli
multiplication and commutation. For three qubits, the condition is
also necessary: each such four-set hides a population difference
between two orthogonal pure states. We thereby obtain an exact
characterization of both UDP and UDA for every three-qubit Pauli
measurement scheme and an analytic explanation of the $945$ minimal
failing sets identified in earlier work \cite{Jackson2013,MaEtAl2016}.
For $n\geq4$, four-set-freeness remains sufficient but is no longer
necessary; eigenvalue multiplicities can preserve uniqueness even
when a four-set is omitted.

We also prove that universal UDP and universal UDA are equivalent
whenever the omitted Pauli observables commute pairwise, for any
number of qubits. To analyze this class, we apply a Clifford
transformation that makes all omitted observables diagonal in the
computational basis. In these transformed coordinates, every
nondiagonal Pauli observable is retained, so states with identical
retained data can differ only in their diagonal entries. We show
that a failure of UDA then forces two distinct computational-basis
states to have identical retained data, and therefore also implies
a failure of UDP. For this commuting class, we obtain an exact
criterion: after the Clifford transformation, universal UDP and
UDA hold if and only if the retained diagonal Pauli observables
distinguish every pair of computational-basis states. Together
with the quartic certificate, this result shows that any separation
between universal UDP and UDA for $n\geq4$ requires a noncommuting
omitted set containing four distinct observables that commute
pairwise and multiply to $\pm I$.

The remainder of the paper is organized as follows.
Section~\ref{sec:prelim} reviews the measurement-kernel criteria and
the binary representation of Pauli observables, including the
three-qubit failing-set example. Section~\ref{sec:quartic} derives the
quartic identity and its positive semidefinite decomposition.
Section~\ref{sec:tomography} applies this identity to four-set-free
supports, the exact three-qubit characterization, and the higher-qubit
case. Section~\ref{sec:general} proves UDP--UDA equivalence for
commuting omitted supports and analyzes the associated eigenvalue
multiplicities. Finally, Sec.~\ref{sec:discussion} summarizes the
results and discusses open questions.

\section{Preliminaries}
\label{sec:prelim}

To turn the measurement-design question into a structural problem, we
first describe the changes of a state that are invisible to the retained
measurements. We express these changes through the measurement kernel,
translate pure-state uniqueness into spectral conditions, and then encode
the omitted Pauli observables by binary labels. A three-qubit failing set
illustrates how an omitted commuting four-set creates an ambiguity.

\subsection{Measurement data and indistinguishable states}
\label{sec:measurement-kernel}

A state on a $d$-dimensional Hilbert space is represented by a density
operator $\rho\succeq0$ with $\Tr\rho=1$, and it is pure when
$\rho=|\psi\rangle\langle\psi|$ for a normalized vector $|\psi\rangle$.
Given a linearly independent family of Hermitian observables
$\mathcal A=\{I,A_1,\ldots,A_m\}$, define its linear measurement map
on an arbitrary Hermitian operator $\Delta$ by
\begin{equation}
  \mathcal{M}_{\mathcal A}(\Delta)
  =\bigl(\Tr \Delta,\Tr(A_1\Delta),\ldots,\Tr(A_m\Delta)\bigr).
\end{equation}
For a state $\rho$, the vector $\mathcal M_{\mathcal A}(\rho)$ is the
measurement data: its components $\Tr(A_j\rho)$ are the measured
expectation values, while $\Tr\rho=1$ records normalization. Two states
$\rho$ and $\sigma$ produce the same data precisely when their difference
belongs to the Hermitian kernel
\begin{equation}
  \mathcal{K}_{\mathcal A}
  =\{\Delta=\Delta^\dagger:\mathcal{M}_{\mathcal A}(\Delta)=0\}.
\end{equation}
This kernel consists of Hermitian changes that are invisible to the
measured observables, and every such change is traceless.

Given the observable family $\mathcal A$, a pure state is uniquely
determined among pure states (UDP) if no other pure state gives the same
measurement data, and uniquely determined among all states (UDA) if no
other density operator does so. We say that $\mathcal A$ has universal
UDP when every pure state is UDP, and universal UDA when every pure state
is UDA. The kernel $\mathcal K_{\mathcal A}$ converts these global
uniqueness requirements into spectral conditions. For any nonzero
$\Delta\in\mathcal K_{\mathcal A}$, let $\Delta_+$ and $\Delta_-$ denote
its positive and negative spectral parts, respectively. They are positive
semidefinite with orthogonal supports and satisfy
$\Delta=\Delta_+-\Delta_-$. Tracelessness gives
$\Tr \Delta_+=\Tr \Delta_-=t>0$, and hence
\begin{equation}
  \Delta=t(\rho_+-\rho_-),\qquad \rho_\pm=\Delta_\pm/t.
  \label{eq:kernel-state-difference}
\end{equation}
The density operators $\rho_+$ and $\rho_-$ give identical measurement
data. The state $\rho_+$ is pure exactly when $\Delta$ has one positive
eigenvalue, and $\rho_-$ is pure exactly when $\Delta$ has one negative
eigenvalue. In particular, both are pure when $\Delta$ has rank two.
For the reverse direction, suppose that a pure state
$\rho_\phi=|\phi\rangle\langle\phi|$ and a distinct density operator
$\sigma$ give identical measurement data. Then
$\rho_\phi-\sigma\in\mathcal K_{\mathcal A}$.
For every vector $|v\rangle$ orthogonal to $|\phi\rangle$, one has
\begin{equation}
  \langle v|(\rho_\phi-\sigma)|v\rangle
  =-\langle v|\sigma|v\rangle\leq0.
\end{equation}
Hence $\rho_\phi-\sigma$ has at most one positive eigenvalue. Because
this difference is nonzero and traceless, it must also have a positive
eigenvalue and therefore has exactly one. Reversing the sign shows that
$\sigma-\rho_\phi$ has exactly one negative eigenvalue.

Let $n_+(\Delta)$ and $n_-(\Delta)$ count the strictly positive and
strictly negative eigenvalues, including multiplicities. Under the
correspondence above, a nonzero kernel operator gives a pure--pure ambiguity
exactly when it has rank two. It gives an ambiguity between a pure state
and another density operator exactly when $n_+(\Delta)=1$ or
$n_-(\Delta)=1$. Consequently, the standard kernel criteria are
\begin{equation}
  \text{UDP:}\quad \rank \Delta\geq3,
  \label{eq:udp-kernel}
\end{equation}
\begin{equation}
  \text{UDA:}\quad n_+(\Delta)\geq2,\qquad n_-(\Delta)\geq2,
  \label{eq:uda-kernel}
\end{equation}
in each case for every nonzero $\Delta\in\mathcal K_{\mathcal A}$
\cite{HeinosaariMazzarellaWolf2013,ChenEtAl2013,CarmeliEtAl2014,
BaldwinDeutschKalev2016}.
We call $(n_+(\Delta),n_-(\Delta))$ the inertia of $\Delta$. A one-sided
inertia obstruction means that one of these two counts is one. Thus the
possible gap between UDP and UDA is a pure--mixed ambiguity without any
pure--pure ambiguity in the same measurement scheme. These criteria apply
to arbitrary Hermitian observables; the Pauli structure enters through the
particular form of their kernel.

\subsection{Pauli measurements and binary labels}
\label{sec:pauli-notation}

We now specialize to Pauli measurements on $n$ qubits, with $d=2^n$.
For a single qubit, the Pauli matrices are
\begin{equation}
  X=\begin{pmatrix}0&1\\1&0\end{pmatrix},\quad
  Y=\begin{pmatrix}0&-i\\i&0\end{pmatrix},\quad
  Z=\begin{pmatrix}1&0\\0&-1\end{pmatrix}.
\end{equation}
They represent measurements along the three Bloch-sphere axes, with
outcomes $\pm1$. An $n$-qubit \emph{Pauli string} is a tensor product
of $I,X,Y,Z$, such as $XZI=X\otimes Z\otimes I$. Its expectation
value is the average product of the corresponding local measurement
outcomes; a qubit carrying an identity factor does not contribute to
this product.
Every nonidentity Pauli string is traceless and has eigenvalues $\pm1$.
Together with the identity, the $4^n$ strings form an orthogonal basis
of Hermitian operators.

Binary labels give a compact description of these same observables.
For each qubit, assign a pair of bits according to
\begin{equation}
  \begin{gathered}
  I\leftrightarrow(0,0),\qquad X\leftrightarrow(1,0),\\
  Z\leftrightarrow(0,1),\qquad Y\leftrightarrow(1,1).
  \end{gathered}
\end{equation}
Here $\F_2=\{0,1\}$ is the binary field, with arithmetic modulo two.
Collecting the first and second bits into $x,z\in\F_2^n$ gives the
label $u=(x,z)\in V_n=\F_2^{2n}$; for example, $XZI$ has label
$(100,010)$. A binary label specifies a Pauli string only up to an
overall phase. Because a measured observable must be Hermitian, we fix
this phase by defining
\begin{equation}
  \begin{aligned}
  P_u&=i^{x\cdot z}X^xZ^z,\\
  X^x&=\bigotimes_{j=1}^nX^{x_j},\qquad
  Z^z=\bigotimes_{j=1}^nZ^{z_j}.
  \end{aligned}
  \label{eq:pauli-convention}
\end{equation}
Here $x\cdot z=\sum_{j=1}^n x_jz_j$ is evaluated over the integers. The
phase convention makes the local label $(1,1)$ represent $Y=iXZ$, and
gives $P_0=I$ and $P_u^2=I$ for every $u\in V_n$. Multiplication of
Pauli strings adds their binary labels, componentwise modulo two, while
the overall phase is kept separately. In particular, a product is
proportional to the identity exactly when the sum of its labels is zero.

Commutation also has a binary description. For $v=(x',z')$, define
\begin{equation}
  \omega(u,v)=x\cdot z'+z\cdot x'\pmod2.
  \label{eq:symplectic}
\end{equation}
The form $\omega$ is bilinear, alternating, and nondegenerate; in
particular, $\omega(u,u)=0$, and only $u=0$ commutes with every label.
The binary symplectic form records commutation, while the trace relation
expresses the orthogonality of the Pauli basis:
\begin{equation}
  P_uP_v=(-1)^{\omega(u,v)}P_vP_u,
  \qquad \Tr(P_uP_v)=d\delta_{u,v}.
  \label{eq:pauli-basic}
\end{equation}
Thus $\omega(u,v)=0$ for commuting pairs and $1$ for anticommuting
pairs. If two strings commute, their product is Hermitian, so the
phase relating it to the chosen representative is a real sign:
\begin{equation}
  P_uP_v=\epsilon(u,v)P_{u+v},\qquad \epsilon(u,v)\in\{\pm1\}.
  \label{eq:epsilon}
\end{equation}

A $2^n\times2^n$ unitary operator $U$ is called a
\emph{Clifford unitary} if, for every Pauli string $P$, there is a
Pauli string $P'$ such that
\begin{equation}
  UPU^\dagger=\pm P'.
\end{equation}
The set of all such unitaries is the Clifford group. For example, the
single-qubit Hadamard gate is
\begin{equation}
  H=\frac1{\sqrt2}\begin{pmatrix}1&1\\1&-1\end{pmatrix}.
\end{equation}
Conjugation by $H$ exchanges $X$ and $Z$ and reverses the sign of $Y$,
as expressed by
\begin{equation}
  HXH^\dagger=Z,\qquad HZH^\dagger=X,\qquad HYH^\dagger=-Y.
\end{equation}
In the binary description, the Hadamard gate exchanges the labels
$(1,0)$ and $(0,1)$ of $X$ and $Z$. Its action is therefore
\begin{equation}
  \begin{pmatrix}x\\z\end{pmatrix}
  \longmapsto M_H\begin{pmatrix}x\\z\end{pmatrix},\qquad
  M_H=\begin{pmatrix}0&1\\1&0\end{pmatrix}.
\end{equation}
Thus $M_H$ fixes the label $(1,1)$ of $Y$, while the sign in
$HYH^\dagger=-Y$ is not encoded by the binary label.

For a general $n$-qubit Clifford unitary, conjugation preserves Pauli
multiplication, so its action on labels preserves binary addition.
This action is therefore represented by an invertible $2n\times2n$
binary matrix $M_U$. The operator and binary descriptions are related by
\begin{equation}
  UP_uU^\dagger=\eta_U(u)P_{M_Uu},\qquad
  \eta_U(u)\in\{\pm1\}.
\end{equation}
Here $u=(x,z)$ is treated as a column of $2n$ bits, and multiplication
by $M_U$ is over $\F_2$. Thus $U$ acts on quantum states, whereas
$M_U$ records how it permutes the Pauli labels, with the signs stored
in $\eta_U$. Conjugation preserves commutation, so
\begin{equation}
  \omega(M_Uu,M_Uv)=\omega(u,v).
\end{equation}
An invertible binary map with this property is called symplectic.
Conversely, every symplectic map of the labels is implemented by a
Clifford unitary, up to individual Pauli signs
\cite{Gottesman1997,DehaeneDeMoor2003}. Transforming the state and
measured observables by the same Clifford unitary preserves all expectation
values, while the possible Pauli signs only relabel the $\pm1$ outcomes.
Clifford transformations therefore relate Pauli measurement schemes with
the same distinguishability properties.

For a Pauli measurement scheme, let $S\subseteq V_n\setminus\{0\}$
be the measured support and
$T=(V_n\setminus\{0\})\setminus S$ its omitted complement. Pauli
orthogonality identifies the kernel with the operator span of $T$. More
generally, for any label set $R\subseteq V_n\setminus\{0\}$, define
\begin{equation}
  \mathcal V_R:=\Span_{\R}\{P_u:u\in R\},
  \qquad
  \mathcal K_{\mathcal A}=\mathcal V_T.
  \label{eq:pauli-kernel}
\end{equation}
Thus the kernel consists exactly of real combinations of the omitted
Pauli observables. Throughout, spans of labels are over $\F_2$, whereas
spans of Hermitian operators are over $\R$. Our task is to decide whether
this span can contain the state differences described in
Sec.~\ref{sec:measurement-kernel},
using only the choice of omitted observables.

\subsection{Failing sets and pure-state ambiguity}
\label{sec:failing-sets}

The connection between omitted observables and indistinguishable states
is captured by the notion of a \emph{failing set}. A failing set is a
collection of nonidentity Pauli observables whose real span contains a
nonzero operator with only one positive or only one negative eigenvalue
\cite{MaEtAl2016,Jackson2013}. Omitting every observable in such a set
therefore prevents universal UDA. A concrete three-qubit example is
$\{IIZ,IZI,ZII,ZZZ\}$, for which
\begin{equation}
  \begin{split}
  &|000\rangle\langle000|-|111\rangle\langle111|\\
  &\qquad=\frac14(IIZ+IZI+ZII+ZZZ).
  \end{split}
  \label{eq:standard-failing}
\end{equation}
The four terms on the right are exactly the nonidentity Pauli observables
whose expectations differ between $|000\rangle$ and $|111\rangle$. If
all four are omitted, the projector difference in
Eq.~\eqref{eq:standard-failing} belongs to $\mathcal K_{\mathcal A}$, so
the two pure states give identical retained measurement data. By
linearity, every mixture
\begin{equation}
  p|000\rangle\langle000|+(1-p)|111\rangle\langle111|,
  \qquad 0\leq p\leq1,
\end{equation}
gives the same retained data. The omitted observables
therefore conceal the relative population of the two orthogonal states,
producing both pure--pure and pure--mixed ambiguities. Conjugating both
the states and the omitted observables by the same Clifford unitary
produces further examples with the same ambiguity.

The example above is a special case of four distinct nonidentity Pauli
strings that commute pairwise and multiply to $\pm I$. If
$F=\{u_1,u_2,u_3,u_4\}$ denotes their binary labels, the multiplication
and commutation rules above make these conditions equivalent to
\begin{equation}
  \begin{aligned}
  u_1+u_2+u_3+u_4&=0,\\
  \omega(u_i,u_j)&=0\quad(1\leq i<j\leq4).
  \end{aligned}
  \label{eq:fourset}
\end{equation}
We call this configuration a \emph{commuting zero-sum four-set}.
For the example above, the labels have $x=000$ and respective
$z$-components $001,010,100,111$. All symplectic products vanish,
and $001+010+100+111=000$ in binary arithmetic. The binary labels
capture the configuration without retaining the sign of the Pauli
product.
A support $T$ is \emph{four-set-free} if it contains no such
configuration. At three qubits, these four-sets are precisely the
Clifford images of the example above, up to Pauli signs. The quartic
certificate in Sec.~\ref{sec:tomography} will establish that excluding
all of them is sufficient for universal UDA. Thus this particular
family of failing sets gives an exact three-qubit criterion.

The state interpretation also explains the role of dimension. Put
$x=u_1$, $p=u_1+u_2$, and $q=u_1+u_3$, so that
$F=\{x,x+p,x+q,x+p+q\}$. The labels $x,p,q$ are independent and
commuting: a nontrivial dependence would make two members of $F$
equal or one of them zero. Their common eigenvalues label eight
joint eigenspaces, each of dimension $d/8$; in particular, such a
four-set requires $n\geq3$. The operator
\begin{equation}
  P_x(I+P_p)(I+P_q)\in \mathcal V_F
  \label{eq:fourset-factor}
\end{equation}
acts only on the sector where $P_p=P_q=+1$. Within that sector,
it has eigenvalue $+4$ or $-4$ according to the eigenvalue of $P_x$.
Both signs have multiplicity $d/8$, so its rank is
\begin{equation}
  d/4=2^{n-2}.
  \label{eq:fourset-rank}
\end{equation}
For three qubits, $d/8=1$, so the positive and negative eigenspaces are
both one-dimensional. The operator is therefore proportional to the
difference of two pure-state projectors. For $n\geq4$, $d/8\geq2$, and
normalizing its positive and negative spectral parts gives two mixed
states, each of rank $d/8$. This particular kernel operator therefore
witnesses only a mixed--mixed ambiguity and does not by itself violate
pure-state uniqueness. Section~\ref{subsec:commuting-ranks} extends the
calculation to the entire four-operator span, showing that every nonzero
operator in this span has equally many positive and negative eigenvalues,
with at least $d/8$ of each sign.

One can check for these configurations directly from the omitted
observables. Over $\F_2$, four labels sum to zero exactly when the two
pairs in any partition have the same sum. Group the unordered commuting
pairs in $T$ by this common binary sum $w$. Two distinct pairs
$\{u,u+w\}$ and $\{r,r+w\}$ in
the same group form a commuting zero-sum four-set exactly when
\begin{equation}
  \omega(u,r)=0.
  \label{eq:bucket-test}
\end{equation}
The pairs are disjoint and their four labels sum to zero. Since
$\omega(u,w)=\omega(r,w)=0$, every cross-pair symplectic product
equals $\omega(u,r)$. Every four-set is detected by its three
partitions into pairs. This test uses only Pauli commutation and
binary addition, without searching over states or continuous
operator coefficients.

\section{A general quartic identity for Pauli operators}
\label{sec:quartic}

To obtain a criterion that can be checked directly from the omitted
Pauli observables, we convert the spectral uniqueness condition into an
algebraic identity. The derivation has three steps. We first express the
fourth elementary symmetric polynomial through traces and identify the
relevant eigenvalue patterns. We next expand it in commuting Pauli pairs
and classify the resulting zero-sum four-label terms by their
anticommutation graphs. Finally, we absorb the noncommuting four-label
contributions into positive-semidefinite quadratic forms, leaving only the
pairwise-commuting four-sets explicit. The resulting identity is applied in
Sec.~\ref{sec:tomography}.

\subsection{A quartic criterion for pure-state uniqueness}
\label{subsec:quartic-criterion}

For a tuple $x=(x_1,\ldots,x_q)$, the elementary symmetric polynomial of
degree $j$ is
\begin{equation}
 e_j(x)=\sum_{1\leq i_1<\cdots<i_j\leq q}x_{i_1}\cdots x_{i_j}.
  \label{eq:elementary-symmetric-definition}
\end{equation}
We set $e_0(x)=1$ and $e_j(x)=0$ for $j>q$.
For a Hermitian operator $A$ with eigenvalues
$\lambda_1,\ldots,\lambda_d$, counted with multiplicity, we evaluate
the same polynomial on its spectrum and write
\begin{equation}
  e_j(A):=e_j(\lambda_1,\ldots,\lambda_d).
\end{equation}
Taking $j=4$ recovers the spectral definition of $e_4(A)$ introduced
in Eq.~\eqref{eq:e4-spectrum}. We use this fourth-degree polynomial
to exclude the eigenvalue patterns responsible for pure-state ambiguity.

Two properties make $e_4$ useful for this purpose. First, it can be
calculated from traces without diagonalizing $A$. Every measurement
kernel operator is traceless, and for $\Tr A=0$ we have
\begin{equation}
  e_4(A)=\frac{[\Tr(A^2)]^2}{8}-\frac{\Tr(A^4)}{4}.
  \label{eq:e4-trace}
\end{equation}
To obtain this expression, let $p_j=\Tr(A^j)$ denote the power sums
of the eigenvalues. Newton's identities give $e_2=-p_2/2$ and
$4e_4=-e_2p_2-p_4$ when $p_1=e_1=0$.

Second, positivity of $e_4$ guarantees the eigenvalue signs needed
for UDA. For any nonzero traceless Hermitian $A$,
\begin{equation}
  e_4(A)>0
  \quad\Longrightarrow\quad
  n_+(A),n_-(A)\geq2.
  \label{eq:e4-inertia}
\end{equation}
The reason is that a single positive eigenvalue forces $e_4$ to be
nonpositive. Write such a nonzero spectrum as
$\lambda,-x_1,\ldots,-x_m$, with $\lambda,x_i>0$, and let
$x=(x_1,\ldots,x_m)$. Tracelessness fixes $\lambda=e_1(x)$. If
$m\leq2$, the rank is at most three, so every product
of four eigenvalues contains a zero factor and $e_4(A)=0$.
If $m\geq3$, we separate the terms in $e_4(A)$ into those that contain
$\lambda$ and those that do not. This gives
\begin{equation}
  e_4(A)=e_4(x)-\lambda e_3(x)
        =e_4(x)-e_1(x)e_3(x)<0.
\end{equation}
Indeed, $e_1(x)e_3(x)$ contains four copies of every term in $e_4(x)$,
together with strictly positive terms having a repeated index.
The case of a single negative eigenvalue follows by replacing $A$
with $-A$, which leaves $e_4$ unchanged. These are precisely the
one-sided inertia obstructions described in
Sec.~\ref{sec:measurement-kernel}.
For a Pauli combination, Eq.~\eqref{eq:e4-trace} therefore expresses
$e_4(A)$ as a homogeneous quartic polynomial in its real Pauli
coefficients.

\subsection{Pauli expansion and anticommutation graphs}
\label{subsec:graph-expansion}

With real Pauli coefficients $a_u$, write
\begin{equation}
  A=\sum_{u\neq0}a_uP_u,\qquad
  s=\sum_{u\neq0}a_u^2.
\end{equation}
Since every Pauli string squares to the
identity, the diagonal terms of $A^2$ give $sI$. Anticommuting cross
terms cancel, whereas a commuting pair with $u+v=w$ contributes to the
coefficient $c_w$. For each $w\neq0$, define
\begin{equation}
  c_w=2\!\sum_{\substack{u<v\\u+v=w,\ \omega(u,v)=0}}
  \epsilon(u,v)a_ua_v.
  \label{eq:cw}
\end{equation}
Here and below, all labels are nonzero, $u<v$ refers to a fixed
ordering, and for commuting pairs the sign $\epsilon(u,v)$ is defined by
$P_uP_v=\epsilon(u,v)P_{u+v}$. Thus
\begin{equation}
  A^2=sI+\sum_{w\neq0}c_wP_w.
  \label{eq:A2}
\end{equation}
Pauli orthogonality now gives both traces directly:
\begin{equation}
  \Tr(A^2)=ds,\qquad
  \Tr(A^4)=d\left(s^2+\sum_{w\neq0}c_w^2\right).
  \label{eq:pauli-traces}
\end{equation}
The second expression follows by taking the trace of $(A^2)^2$, since
products of distinct Pauli basis elements have zero trace.
Substituting into Eq.~\eqref{eq:e4-trace} gives
\begin{equation}
  e_4(A)=\frac{d(d-2)}8s^2-\frac d4\sum_{w\neq0}c_w^2.
  \label{eq:e4-cw}
\end{equation}

Expanding the square of Eq.~\eqref{eq:cw}, we obtain
\begin{equation}
\begin{aligned}
 c_w^2={}&4\!\sum_{\substack{u<v\\u+v=w,\ \omega(u,v)=0}}a_u^2a_v^2
 \\
 &+8\!\sum_{\substack{u<v,\ r<t,\ (u,v)<(r,t)\\
 u+v=r+t=w\\\omega(u,v)=\omega(r,t)=0}}
 \epsilon(u,v)\epsilon(r,t)a_ua_va_ra_t.
\end{aligned}
 \label{eq:cw-square}
\end{equation}
Here the pairs are ordered lexicographically, so each pair of distinct
pairs occurs once in the second sum.
The cross terms in Eq.~\eqref{eq:cw-square} involve two distinct
pairs with the same sum. Such pairs cannot share a label, since a
shared label and a shared sum would make the pairs identical. Their
union is therefore a four-element set $F=\{u_1,u_2,u_3,u_4\}$ with
binary sum zero. Its three partitions into two unordered pairs are
\begin{equation}
  \begin{gathered}
    \{\{u_1,u_2\},\{u_3,u_4\}\},\\
    \{\{u_1,u_3\},\{u_2,u_4\}\},\\
    \{\{u_1,u_4\},\{u_2,u_3\}\}.
  \end{gathered}
  \label{eq:pair-partitions}
\end{equation}
Each of these partitions satisfies the equal-sum condition in
Eq.~\eqref{eq:cw-square} and contributes exactly when both pairs commute.

To organize these contributions, associate an anticommutation graph
with $F$: its vertices are the four labels, and an edge joins $u,v$
when $\omega(u,v)=1$. Such graphs also constrain jointly attainable
Pauli expectation values through bounds on the joint numerical range
\cite{XuSchwonnekWinter2024}. Here we use them to organize the terms
of the quartic expansion. Write $K_r$ for the complete graph on $r$
vertices, $\overline{K_r}$ for the edgeless graph, and $C_r$ for the
cycle on $r\geq3$ vertices; $\sqcup$ denotes disjoint union.
The zero-sum condition strongly restricts this graph. For each $i$,
bilinearity gives
\begin{equation}
 \sum_{j\neq i}\omega(u_i,u_j)
 =\omega\!\left(u_i,\sum_{j\neq i}u_j\right)
 =\omega(u_i,u_i)=0
 \label{eq:even-degree}
\end{equation}
in $\F_2$. Thus every vertex has degree $0$ or $2$. Each nonisolated
component is a cycle, leaving only the three graphs
$\overline{K_4}$, $C_4$, and $K_3\sqcup K_1$ shown in
Fig.~\ref{fig:anticommutation-graphs}.

\begin{figure}
  \centering
  \AnticommutationGraphs
  \caption{Anticommutation graphs for four distinct nonzero Pauli labels whose binary sum is zero. Each vertex represents a label, each edge joins a pair of anticommuting labels, and $|E|$ gives the number of edges. The three possibilities are (a) the edgeless graph $\overline{K_4}$, in which all three partitions into two pairs are commuting; (b) the four-cycle $C_4$, in which only the partition along its two nonedges is commuting; and (c) a triangle with an isolated vertex, $K_3\sqcup K_1$, in which no partition consists of two commuting pairs.}
  \label{fig:anticommutation-graphs}
\end{figure}

For each graph $G$, let $\mathcal F_G$ denote the family of four-element
subsets of $V_n\setminus\{0\}$ with binary sum zero and anticommutation
graph isomorphic to $G$. When $F\in\FKempty$, all four Pauli strings
commute, so their product defines an order-independent sign
$\tau_F\in\{+1,-1\}$ through $\prod_{u\in F}P_u=\tau_FI$.
All three partitions in Eq.~\eqref{eq:pair-partitions} are commuting,
and each has $\epsilon(u,v)\epsilon(r,t)=\tau_F$.
Each partition contributes $-2d\,\tau_F\prod_{x\in F}a_x$ to
Eq.~\eqref{eq:e4-cw}, for a total of
$-6d\,\tau_F\prod_{x\in F}a_x$.

For $F\in\FCfour$, only the partition along the two nonedges consists
of commuting pairs. Writing it as $\{\{u,v\},\{r,t\}\}$, define
\begin{equation}
 \sigma_F=\epsilon(u,v)\epsilon(r,t)\in\{+1,-1\},
 \qquad F\in\FCfour.
 \label{eq:sigmaF}
\end{equation}
This partition contributes $-2d\,\sigma_F\prod_{x\in F}a_x$.
For $K_3\sqcup K_1$, pairing the isolated vertex with any triangle
vertex leaves an anticommuting pair. No partition then contributes.

The remaining coefficients follow from
\begin{equation}
  s^2=\sum_{u\neq0}a_u^4+2\sum_{u<v}a_u^2a_v^2.
  \label{eq:s-square}
\end{equation}
Every squared-pair term initially has coefficient $d(d-2)/4$ in
Eq.~\eqref{eq:e4-cw}; the diagonal terms of $c_w^2$ subtract a
further $d$ for each commuting pair. Collecting these terms and the
three graph cases gives the general expansion
\begin{equation}
\begin{aligned}
 e_4(A)={}&\frac{d(d-2)}8\sum_{u\neq0}a_u^4
 +\frac{d(d-6)}4\!\sum_{\substack{u<v\\\omega(u,v)=0}}a_u^2a_v^2
 \\
 &+\frac{d(d-2)}4\!\sum_{\substack{u<v\\\omega(u,v)=1}}a_u^2a_v^2
 \\
 &-6d\sum_{F\in\FKempty}\tau_F\prod_{u\in F}a_u
 \\
 &-2d\sum_{F\in\FCfour}\sigma_F\prod_{u\in F}a_u.
\end{aligned}
 \label{eq:monomial-expansion}
\end{equation}
Only the last two sums involve four distinct labels. Their signs
depend on Pauli multiplication and on the coefficients $a_u$.

\subsection{Positive semidefinite decomposition}
\label{subsec:psd-decomposition}

We now rewrite the squared-pair terms from anticommuting pairs and the
$C_4$ terms in Eq.~\eqref{eq:monomial-expansion} as quadratic forms.
For each $w\neq0$, define the real column vector $z_w$, whose coordinates
are the products $a_ua_v$ indexed by the unordered pairs $\{u,v\}$ with
$u+v=w$ and $\omega(u,v)=1$, by
\begin{equation}
  z_w=\bigl(a_ua_v\bigr)_{\substack{u<v\\u+v=w,\ \omega(u,v)=1}}
      \in\mathbb R^{d^2/4}.
  \label{eq:zw}
\end{equation}
Since $v=u+w$ and $\omega(u,u+w)=\omega(u,w)$, the indexing condition
has $d^2/2$ solutions for $u$. The labels $u$ and $u+w$ determine the
same unordered pair, leaving $d^2/4$ coordinates. A coordinate is zero
whenever either of its labels lies outside the support.

Let $\alpha=\{u,v\}$ and $\beta=\{r,t\}$ be distinct indices of $z_w$.
Since both pairs have sum $w$, sharing a label would force the other
labels to coincide, so $\alpha$ and $\beta$ are disjoint. Furthermore,
because $u+v+r+t=0$ and the anticommutation graph of $\alpha\cup\beta$
contains two edges with no common endpoint, the classification in
Fig.~\ref{fig:anticommutation-graphs} shows that $\alpha\cup\beta$ is a
$C_4$ four-set, whose unique commuting partition
determines $\sigma_{\alpha\cup\beta}$. Define the real symmetric matrix
$H_w\in\mathbb R^{(d^2/4)\times(d^2/4)}$ by
\begin{equation}
  (H_w)_{\alpha,\beta}=
  \begin{cases}
    1,&\alpha=\beta,\\
    \sigma_{\alpha\cup\beta},&\alpha\neq\beta.
  \end{cases}
  \label{eq:Hw-entries}
\end{equation}

Consequently, with $I_{d^2/4}$ denoting the identity on
$\mathbb R^{d^2/4}$, these two terms become
\begin{equation}
\begin{aligned}
  &\sum_{w\neq0}z_w^{\mathsf T}
     \left(\frac{d^2}4I_{d^2/4}-\frac d2H_w\right)z_w
  \\
  &\quad=\frac{d(d-2)}4\!\sum_{\substack{u<v\\\omega(u,v)=1}}a_u^2a_v^2
      -2d\sum_{F\in\FCfour}\sigma_F\prod_{u\in F}a_u.
\end{aligned}
  \label{eq:coefficient-count}
\end{equation}
To check the coefficients in Eq.~\eqref{eq:coefficient-count}, we first
consider the diagonal terms. Since the diagonal entries of $H_w$ are
$1$, the coordinate square $(z_w)_{\{u,v\}}^2=a_u^2a_v^2$ has
coefficient
\begin{equation}
  \frac{d^2}4-\frac d2=\frac{d(d-2)}4.
  \label{eq:diagonal-coefficient}
\end{equation}
Every unordered anticommuting pair belongs to exactly one group,
namely $w=u+v$, so summing over $w$ produces the first sum on the
right-hand side without repetition. For the off-diagonal terms, let
$\alpha=\{u,v\}$ and $\beta=\{r,t\}$ be distinct indices. The two
symmetric matrix entries together contribute
\begin{equation}
  2\left(-\frac d2\sigma_{\alpha\cup\beta}\right)
    (a_ua_v)(a_ra_t)
  =-d\,\sigma_F\prod_{x\in F}a_x,
  \label{eq:offdiagonal-contribution}
\end{equation}
where $F=\alpha\cup\beta$. If the vertices of a $C_4$ four-set are
$u_1,u_2,u_3,u_4$ in cyclic order, its two partitions into
anticommuting pairs are
\begin{equation}
  \{\{u_1,u_2\},\{u_3,u_4\}\},\qquad
  \{\{u_1,u_4\},\{u_2,u_3\}\}.
  \label{eq:C4-anticommuting-partitions}
\end{equation}
The zero-sum condition makes the pair sums agree within each partition,
while the two partitions give distinct values of $w$.
Thus each $F\in\FCfour$ is counted twice, with total coefficient
$-2d\,\sigma_F$, which gives the second sum on the right-hand side.

The quadratic-form representation in Eq.~\eqref{eq:coefficient-count}
is useful because its coefficient matrices are positive semidefinite.
We establish this by using Clifford conjugation to reduce every nonzero
$w$ to a single canonical representative. This reduction preserves the
quartic quantity because unitary conjugation preserves the spectrum, and
hence
\begin{equation}
  e_4(UAU^\dagger)=e_4(A).
  \label{eq:e4-Clifford}
\end{equation}
A Clifford unitary also maps Pauli strings to signed Pauli strings. For a
Clifford $U$, write
\begin{equation}
  UP_uU^\dagger=\eta_U(u)P_{M_Uu},\qquad\eta_U(u)\in\{\pm1\}.
  \label{eq:Clifford-signs}
\end{equation}
The map $M_U$ preserves sums and commutation, so it sends the indexing
pairs of $z_w$ bijectively to those of $z_{M_Uw}$. For a commuting pair,
conjugating $P_uP_v=\epsilon(u,v)P_{u+v}$ gives
\begin{equation}
  \epsilon(M_Uu,M_Uv)
  =\eta_U(u)\eta_U(v)\eta_U(u+v)\epsilon(u,v).
  \label{eq:epsilon-transform}
\end{equation}
In a commuting partition of a zero-sum four-set, the two pair-sums agree,
so the two factors attached to that sum cancel. Therefore
\begin{equation}
  \sigma_{M_UF}=\left(\prod_{u\in F}\eta_U(u)\right)\sigma_F.
  \label{eq:sigma-transform}
\end{equation}
Here $M_UF=\{M_Uu:u\in F\}$. Write
\begin{equation}
  A':=UAU^\dagger=\sum_{v\neq0}a'_vP_v,
\end{equation}
and define $z'_{M_Uw}$ from the coefficients $a'_v$ as in
Eq.~\eqref{eq:zw}. Then
$a'_{M_Uu}=\eta_U(u)a_u$. Let $S_U$ be the signed permutation that sends
the coordinate indexed by $\{u,v\}$ to the coordinate indexed by
$\{M_Uu,M_Uv\}$ with sign $\eta_U(u)\eta_U(v)$. The transformed vector
and coefficient matrix satisfy
\begin{equation}
  z'_{M_Uw}=S_Uz_w,
  \qquad
  H_{M_Uw}=S_UH_wS_U^{\mathsf T}.
\end{equation}
Thus $H_w$ and $H_{M_Uw}$ are orthogonally similar.

Since the Clifford action is transitive on the nonzero Pauli labels, the
preceding orthogonal similarity reduces the calculation to a single
representative, which we take to satisfy
$P_w=X\otimes I^{\otimes(n-1)}$. With this choice, the anticommuting pairs
with sum $w$ are
\begin{equation}
  \{Z\otimes P_r,\;Y\otimes P_r\},\qquad r\in V_{n-1}.
  \label{eq:canonical-pairs}
\end{equation}
These $|V_{n-1}|=4^{n-1}=d^2/4$ pairs are indexed by $r$. For distinct
$r,s$, the unique commuting partition matches the two $Z$ factors and
the two $Y$ factors when $P_r$ and $P_s$ commute, giving sign $+1$.
When $P_r$ and $P_s$ anticommute, it matches each $Z$ factor with a
$Y$ factor, giving sign $-1$. Including the diagonal entries, we obtain
\begin{equation}
  (H_w)_{r,s}=(-1)^{\omega_{n-1}(r,s)},
  \label{eq:canonical-H}
\end{equation}
where $\omega_{n-1}$ is the symplectic form on the remaining qubits.
To compute $H_w^2$, observe that when $r=t$, the exponent
$\omega_{n-1}(r+t,s)$ vanishes for every $s$. When $r\neq t$,
nondegeneracy makes it a nonzero linear functional of $s$, so the
positive and negative terms cancel. Therefore
\begin{equation}
  (H_w^2)_{r,t}
    =\sum_{s\in V_{n-1}}(-1)^{\omega_{n-1}(r+t,s)}
    =\frac{d^2}4\delta_{r,t}.
  \label{eq:canonical-H-square}
\end{equation}
Orthogonal similarity carries this identity back to every nonzero $w$,
and hence
\begin{equation}
  \frac{d^2}4I_{d^2/4}-\frac d2H_w
    =\frac12\left(\frac d2I_{d^2/4}-H_w\right)^2\succeq0.
  \label{eq:coefficient-square}
\end{equation}
The right-hand side is positive semidefinite because
$(d/2)I_{d^2/4}-H_w$ is real symmetric.
Substituting Eqs.~\eqref{eq:coefficient-count} and
\eqref{eq:coefficient-square} into Eq.~\eqref{eq:monomial-expansion}
gives the exact quartic identity
\begin{equation}
\begin{aligned}
  &e_4(A)+6d\sum_{F\in\FKempty}\tau_F\prod_{u\in F}a_u
  \\
  &\quad=\frac{d(d-2)}8\sum_{u\neq0}a_u^4
  \\
  &\qquad+\frac{d(d-6)}4\!\sum_{\substack{u<v\\\omega(u,v)=0}}a_u^2a_v^2
  \\
  &\qquad+\frac12\sum_{w\neq0}
     \left\|\left(\frac d2I_{d^2/4}-H_w\right)z_w\right\|_2^2.
\end{aligned}
  \label{eq:master}
\end{equation}
The identity incorporates the noncommuting four-label contributions into
nonnegative squared norms while leaving the commuting zero-sum four-set
terms explicit. For every $n\geq1$, it holds for any traceless Hermitian
Pauli combination. For $d\geq8$, the coefficients of the remaining fourth
powers and squared-pair terms are both positive.

\section{Consequences for pure-state tomography}
\label{sec:tomography}

We now translate the quartic identity into conditions on the omitted
Pauli support. Excluding commuting zero-sum four-sets guarantees universal
UDA; for three qubits the condition is also necessary, whereas at larger
qubit numbers eigenvalue multiplicities prevent the same characterization.

\subsection{Four-set-free supports}
\label{subsec:four-set-free-certificate}

\begin{theorem}[Quartic certificate for pure-state tomography]
\label{thm:master}
Let $n\geq3$, $d=2^n$, and suppose that the omitted support $T$
contains no commuting zero-sum four-set. Then every nonzero operator
$A=\sum_{u\in T}a_uP_u$ with real coefficients $a_u$ satisfies
\begin{equation}
  e_4(A)\geq\frac{d(d-2)}8\sum_{u\in T}a_u^4>0.
  \label{eq:quartic-lower-bound}
\end{equation}
Consequently, the retained measurements uniquely determine every pure
state among all density operators (UDA), and hence among pure states
(UDP).
\end{theorem}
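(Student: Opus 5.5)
The proof applies the exact quartic identity, Eq.~\eqref{eq:master}, to the operator $A=\sum_{u\in T}a_uP_u$, viewed as a Pauli combination whose coefficients vanish outside $T$. The point is that every term on the right-hand side of Eq.~\eqref{eq:master} is nonnegative when $n\geq3$. The only term that is not manifestly signed sits on the left: the sum over $\FKempty$ of $\tau_F\prod_{u\in F}a_u$. Each product $\prod_{u\in F}a_u$ is nonzero only if all four labels of $F$ lie in the support of the coefficients, hence in $T$. By definition, every $F\in\FKempty$ is a commuting zero-sum four-set, since it has binary sum zero and all pairs commute. Four-set-freeness of $T$ therefore makes every such product vanish, and the left-hand side reduces to $e_4(A)$.

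Next I bound the right-hand side from below by its first term. The squared-norm terms $\tfrac12\|(\tfrac d2 I-H_w)z_w\|_2^2$ are nonnegative; this is Eq.~\eqref{eq:coefficient-square}, which holds for every $w$. The squared-pair coefficient $d(d-6)/4$ is nonnegative once $d\geq8$, which is exactly where the hypothesis $n\geq3$ enters (at $n=3$ it equals $4$). Dropping these terms gives $e_4(A)\geq \tfrac{d(d-2)}8\sum_{u}a_u^4$, and this sum runs only over $T$ because the other coefficients are zero. If $A\neq0$, some $a_u\neq0$ by Pauli linear independence, and $d(d-2)>0$, so the bound is strictly positive.

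For the tomographic consequence, the kernel is $\mathcal K_{\mathcal A}=\mathcal V_T$ by Eq.~\eqref{eq:pauli-kernel}, so every nonzero kernel operator is such an $A$. It is Hermitian and traceless, and we have just shown $e_4(A)>0$. Eq.~\eqref{eq:e4-inertia} then gives $n_\pm(A)\geq2$, which is the UDA criterion of Eq.~\eqref{eq:uda-kernel}, so universal UDA holds. Universal UDP follows at once: $n_+,n_-\geq2$ implies $\rank A\geq4\geq3$, which is Eq.~\eqref{eq:udp-kernel}. Alternatively, a pure state that is unique among all states is in particular unique among pure states.

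There is no substantial obstacle here, because the work is already done in Eq.~\eqref{eq:master}. The only step that needs care is checking that four-set-freeness removes every $\FKempty$ term. This holds because a term is nonzero only when all four labels of $F$ lie in $T$, and it does not matter whether the $C_4$ terms were absorbed into the quadratic forms, since those are nonnegative regardless.
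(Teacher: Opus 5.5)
Your proposal is correct and follows the paper's proof essentially step for step. Four-set-freeness kills every $\FKempty$ monomial in Eq.~\eqref{eq:master}; the $d(d-6)/4$ squared-pair terms and the squared-norm terms are dropped as nonnegative for $d\geq8$; and Eq.~\eqref{eq:e4-inertia} turns $e_4(A)>0$ into the UDA kernel criterion, from which UDP follows immediately.
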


\begin{proof}
The lower bound follows from the quartic identity in
Eq.~\eqref{eq:master}. Let $A\in \mathcal V_T$ be nonzero, and set $a_u=0$ for $u\notin T$. Since $T$ is four-set-free, each commuting
zero-sum four-set contains at least one label outside $T$, so every
monomial in the corresponding sum in Eq.~\eqref{eq:master} vanishes.
For $n\geq3$ we have $d\geq8$, so the coefficient $d(d-6)/4$ of the
commuting squared-pair terms is positive, and the squared norms are
nonnegative. Dropping these contributions yields the lower bound in
Eq.~\eqref{eq:quartic-lower-bound}. Because the Pauli strings are linearly independent and $A\neq0$, at least one coefficient $a_u$ is
nonzero. The fourth-power sum, and hence $e_4(A)$, is therefore
strictly positive.

The tomography conclusion then follows from Eq.~\eqref{eq:e4-inertia}:
every nonzero $A\in \mathcal V_T$ has at least two positive and two negative
eigenvalues. These are exactly the inequalities required by the
UDA kernel criterion in Eq.~\eqref{eq:uda-kernel}. Physically, as
explained in Sec.~\ref{sec:measurement-kernel},
the difference between a pure state and any distinct density operator
has exactly one positive
eigenvalue and therefore cannot belong to $\mathcal V_T$.
The retained data thus identify every pure state among all
density operators, which
proves universal UDA and hence universal UDP.
\end{proof}

Theorem~\ref{thm:master} thus reduces the certification of universal
UDA to a finite combinatorial test on the omitted observables. The
pair-grouping test in Eq.~\eqref{eq:bucket-test} decides
four-set-freeness using only binary addition and Pauli commutation.
For $n\geq3$, a support that passes this test certifies universal UDA without any optimization over states or operator coefficients.

\subsection{Three qubits and the 945 minimal failing sets}
\label{subsec:three-qubit-failing-sets}

\begin{theorem}[Three-qubit Pauli tomography]
\label{thm:three-qubit}
For a three-qubit Pauli measurement scheme, universal UDP and universal
UDA are equivalent. Both hold if and only if the omitted support $T$
contains no commuting zero-sum four-set.
\end{theorem}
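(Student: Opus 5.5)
The proof is a cycle of three implications:
\[
\text{four-set-free}\;\Longrightarrow\;\text{universal UDA}\;\Longrightarrow\;\text{universal UDP}\;\Longrightarrow\;\text{four-set-free}.
\]
Taken together they give both the equivalence of UDP and UDA and the characterization.

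The first implication is Theorem~\ref{thm:master} specialized to $n=3$ ($d=8$). There $d(d-6)/4=4>0$, so the identity in Eq.~\eqref{eq:master} gives $e_4(A)>0$ on $\mathcal V_T$. By Eq.~\eqref{eq:e4-inertia}, every nonzero $A\in\mathcal V_T$ therefore has at least two eigenvalues of each sign.

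The second implication is immediate from the kernel criteria. If every nonzero kernel operator has $n_\pm\geq2$, its rank is at least four, so Eq.~\eqref{eq:udp-kernel} holds.

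The substance is the third implication, which I prove in contrapositive form. Suppose $T$ contains a commuting zero-sum four-set $F$. Following Sec.~\ref{sec:failing-sets}, write $F=\{x,x+p,x+q,x+p+q\}$ with $x,p,q$ independent and pairwise commuting, and consider
\[
B=P_x(I+P_p)(I+P_q).
\]
Expanding this product gives, up to signs $\epsilon$, the four strings $P_x,P_{x+p},P_{x+q},P_{x+p+q}$. So $B$ is a real combination of Pauli strings in $F\subseteq T$, and $B\in\mathcal V_T$; the signs are real because the strings commute, and $B$ is Hermitian.

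By Eq.~\eqref{eq:fourset-rank}, $B$ has rank $d/4=2$. Its eigenvalues are $+4$ and $-4$, each of multiplicity $d/8=1$, because $P_x$ is a nonidentity Pauli string independent of $p,q$ and so takes both signs within the common $(+1,+1)$ eigenspace of $P_p,P_q$. Hence $B=4(|\psi_+\rangle\langle\psi_+|-|\psi_-\rangle\langle\psi_-|)$ for orthonormal $|\psi_\pm\rangle$. These two distinct pure states give identical retained data, so universal UDP fails.

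The only point needing care is the multiplicity count. The claim that each joint eigenspace of the commuting independent triple $P_x,P_p,P_q$ has dimension $d/8$ uses the fact that commuting independent Pauli labels generate, up to sign, a stabilizer-type group, whose joint eigenspaces all have equal dimension. I will cite Sec.~\ref{sec:failing-sets}, or justify it by extending the triple to a full commuting set and using a Clifford map to take it to $ZII,IZI,IIZ$, where the statement is immediate. That conjugation preserves rank and inertia by Eq.~\eqref{eq:e4-Clifford}. I expect no other obstacle.
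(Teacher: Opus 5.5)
Your proof is correct and follows the paper's own argument: sufficiency comes from Theorem~\ref{thm:master} (which gives UDA and hence UDP), and necessity uses the rank-two operator $P_x(I+P_p)(I+P_q)\in\mathcal V_F\subseteq\mathcal V_T$ of Eq.~\eqref{eq:fourset-factor}, whose spectral decomposition exhibits two orthogonal pure states with identical retained data. One small citation slip: rank and inertia are preserved under Clifford conjugation because unitary conjugation preserves the spectrum, not because of Eq.~\eqref{eq:e4-Clifford} itself, although that equation rests on the same fact.
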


\begin{proof}
Sufficiency follows from
Theorem~\ref{thm:master}. For necessity, suppose that $T$ contains
a commuting zero-sum four-set $F$. At three qubits, the operator in
Eq.~\eqref{eq:fourset-factor} belongs to
$\mathcal V_F\subseteq\mathcal V_T$ and has
rank two by Eq.~\eqref{eq:fourset-rank}. Being Hermitian and traceless,
it has two nonzero eigenvalues $\lambda$ and $-\lambda$, with
$\lambda>0$, and can be written as
\begin{equation}
  A=\lambda\bigl(
    |\psi_+\rangle\langle\psi_+|
    -|\psi_-\rangle\langle\psi_-|
  \bigr),
  \label{eq:three-qubit-witness-states}
\end{equation}
where $|\psi_+\rangle$ and $|\psi_-\rangle$ are orthonormal.
Since $A$ lies in the kernel, $\Tr(P_vA)=0$ for every retained
Pauli observable $P_v$. Substituting
Eq.~\eqref{eq:three-qubit-witness-states} shows that the two pure
states have identical expectations of every retained observable.
Since they are distinct, the scheme fails UDP and
hence also UDA, which establishes necessity.
\end{proof}

Theorem~\ref{thm:three-qubit} provides an analytic characterization of the failing sets
studied in Refs.~\cite{MaEtAl2016,Jackson2013}. In the canonical
example in Eq.~\eqref{eq:standard-failing}, omitting $IIZ$, $IZI$, $ZII$, and
$ZZZ$ hides the population difference between $|000\rangle$ and
$|111\rangle$, as well as the mixing weight of any classical
mixture of these two states. Clifford transformations
generate the other
commuting zero-sum four-sets and map the indistinguishable
states by the same unitary. Since every three-qubit failing support contains
at least one such four-set, these four-sets are precisely the
inclusion-minimal failing supports.

The number of these minimal failing supports, $945$, follows directly from the binary description. We count ordered triples of independent, pairwise-commuting labels
$u,v,w\in V_3=\F_2^6$. There are $2^6-1=63$ choices for $u$.
The condition $\omega(u,v)=0$ leaves $2^5$ labels, and excluding $0$ and $u$ leaves $2^5-2=30$ choices for $v$. By the nondegeneracy of
$\omega$, commuting with both $u$ and $v$ imposes two
independent binary constraints on $w$, which leaves $2^4$
labels; excluding the four elements of
$\Span_{\F_2}\{u,v\}$ leaves $2^4-4=12$ choices. Each triple
determines the four-set
\begin{equation}
  F=\{u,v,w,u+v+w\}.
  \label{eq:four-set-from-triple}
\end{equation}
Conversely, any three elements of such a four-set are independent,
since
a dependence among them would force the fourth label to vanish.
Each unordered four-set is therefore counted $4\cdot3\cdot2=24$ times,
once for each ordered choice of three of its elements, and hence
\begin{equation}
  N_{\mathrm{four\text{-}sets}}
    =\frac{63\cdot30\cdot12}{24}=945.
  \label{eq:945-count}
\end{equation}
This count refers to distinct supports, all of which lie in the Clifford orbit
of the canonical example, up to individual Pauli signs.

The factorization $945=135\cdot7$ relates this count to common
eigenstates of commuting Pauli observables. Each four-set spans a unique three-dimensional commuting
label space $L$. Since such a space has $7\cdot6\cdot4$ ordered bases, the number of these spaces is
\begin{equation}
  N_L=\frac{63\cdot30\cdot12}{7\cdot6\cdot4}=135.
  \label{eq:135-commuting-spaces}
\end{equation}
Each $L$ labels a maximal commuting Pauli family together with its common
eigenbasis of eight states. The seven nonzero binary linear
functionals $\ell:L\to\F_2$ specify the seven four-sets contained in
$L$,
\begin{equation}
  F_\ell=\{u\in L:\ell(u)=1\}.
  \label{eq:four-set-functional}
\end{equation}
For two distinct states in this common eigenbasis, the ratios of
their Pauli eigenvalues form a nontrivial sign pattern
$(-1)^{\ell(u)}$. Every Pauli string outside $L$ anticommutes with some member
of the maximal commuting family and therefore has zero expectation in
both states. The expectation values of the two states thus differ exactly on
$F_\ell$, and the difference of their projectors has this Pauli support.
Each of the seven supports arises from four unordered pairs of
the eight states. The number $945$ therefore counts patterns of
omitted observables that hide these population differences, rather
than the state pairs themselves.

\subsection{Beyond three qubits}
\label{subsec:beyond-three-qubits}

Theorem~\ref{thm:master} remains valid for $n\geq4$, but the
necessity statement of Theorem~\ref{thm:three-qubit} does not extend.
The four-set witness in Eq.~\eqref{eq:fourset-factor} has
$d/8$ eigenvalues of each sign, so for $n\geq4$ its
normalized positive and negative parts are mixed states. Two
related four-qubit examples
illustrate how this multiplicity changes
the tomography problem.

Let $F_0=\{IIZ,IZI,ZII,ZZZ\}$ and $B=\sum_{P\in F_0}P$, so that
Eq.~\eqref{eq:standard-failing} gives
$B=4(|000\rangle\langle000|-|111\rangle\langle111|)$.
First, omit only
\begin{equation}
  T_4=\{IIZI,IZII,ZIII,ZZZI\},
  \label{eq:four-qubit-four-set}
\end{equation}
where we identify Pauli strings with their labels. This support is a commuting
zero-sum four-set, yet the complementary measurement scheme is UDA.
Indeed, every nonzero kernel operator has the form $C\otimes I_2$,
where $C$ is a nonzero traceless Hermitian combination of the
three-qubit strings in $F_0$. Since $C$ has eigenvalues of both
signs and tensoring with $I_2$ doubles the multiplicity of every
eigenvalue,
\begin{equation}
  n_\pm(C\otimes I_2)=2n_\pm(C)\geq2.
  \label{eq:spectator-inertia}
\end{equation}
Hence the UDA kernel criterion holds throughout this span.
In particular, $B\otimes I_2$ has rank four, with nonzero eigenvalues
$+4,+4,-4,-4$, and its normalized positive and negative parts are
$|000\rangle\langle000|\otimes I_2/2$ and
$|111\rangle\langle111|\otimes I_2/2$.

Next, additionally omit the four strings obtained by replacing the
final
$I$ in Eq.~\eqref{eq:four-qubit-four-set} with $Z$:
\begin{equation}
  T_8=\{P\otimes I,\;P\otimes Z:P\in F_0\}.
  \label{eq:four-qubit-eight-set}
\end{equation}
The enlarged kernel now contains
\begin{equation}
  B\otimes(I+Z)
  =8\bigl(|0000\rangle\langle0000|-|1110\rangle\langle1110|\bigr).
  \label{eq:four-qubit-rank-two}
\end{equation}
Consequently, the two pure states $|0000\rangle$ and $|1110\rangle$ have identical retained expectation values, so
both UDP and UDA fail. The additional omitted observables allow
the factor $I+Z=2|0\rangle\langle0|$ to select a single state of the
fourth qubit, thereby removing the multiplicity that protected UDA in
Eq.~\eqref{eq:spectator-inertia}.

Both omitted supports contain a commuting zero-sum four-set,
yet the
first yields a UDA scheme while the second fails even UDP.
Four-set-freeness therefore remains sufficient for $n\geq4$ but is no longer necessary.
These examples do not separate UDP from UDA, since the two properties
agree in each case. They show why deciding uniqueness for a
support containing a four-set requires information about the other
omitted observables.

Numerical evidence nevertheless suggests that UDP and UDA may
remain equivalent more generally for Pauli measurement schemes.
The variational searches reported in Ref.~\cite{ZhangZhuZeng2024}
found agreement between the two properties for four- and five-qubit systems
and identified schemes with $106$ and $393$ Pauli observables,
respectively, including the identity, that passed both numerical tests.
These observations support a broader equivalence but do not establish
it for arbitrary Pauli supports.

\section{Commuting Pauli kernels for an arbitrary number of qubits}
\label{sec:general}

The four-qubit examples in Sec.~\ref{subsec:beyond-three-qubits} show
that eigenvalue multiplicities can preserve UDA even when a commuting
four-set is omitted. For a general commuting support, Clifford
diagonalization reduces the kernel to basis-state populations. This turns
every pure--mixed ambiguity into a pure--pure one and yields both
UDP--UDA equivalence and exact spanning and rank criteria.

\subsection{Clifford reduction and pure-state uniqueness}
\label{subsec:commuting-diagonal-criterion}

Choose independent generators for the binary span of $T$. Since their
pairwise symplectic products vanish, they can be completed to a symplectic
basis. Consequently, a Clifford unitary $U$ maps every Pauli string in
$T$ to a signed product of $I$ and $Z$ factors
\cite{Gottesman1997,DehaeneDeMoor2003}. Absorbing these signs into
the real coefficients gives
\begin{equation}
  \begin{gathered}
    U \mathcal V_T U^\dagger=\Span_{\R}\{Z^u:u\in T_Z\},\\
    T_Z\subseteq\F_2^n\setminus\{0\},
    \qquad Z^u=\bigotimes_{j=1}^n Z^{u_j}.
  \end{gathered}
  \label{eq:commuting-diagonal-form}
\end{equation}
Here $T_Z$ records the transformed labels by their $n$ $Z$-bits alone. In the computational basis,
\begin{equation}
  Z^u|x\rangle=(-1)^{u\cdot x}|x\rangle,
  \qquad x\in\F_2^n,
  \label{eq:diagonal-pauli-outcomes}
\end{equation}
so each basis state gives a deterministic outcome for every diagonal
Pauli observable. The retained diagonal labels are
\begin{equation}
  S_Z=\F_2^n\setminus T_Z,
  \label{eq:measured-diagonal-labels}
\end{equation}
which include $u=0$, the label of the identity. Every nondiagonal Pauli string
is also retained, because all omitted strings are diagonal in these
coordinates. Since states and observables are
transformed by the same $U$, expectation values are preserved, as are kernel ranks and
inertia.

\begin{theorem}[Commuting Pauli kernels]
\label{thm:commuting}
For any pairwise-commuting omitted Pauli support $T$, the kernel
$\mathcal V_T$ contains an operator with exactly one positive or exactly one
negative eigenvalue if and only if it contains a nonzero rank-two
operator. Consequently, universal UDP and universal UDA are
equivalent, and both hold exactly when the retained diagonal labels
after Clifford conjugation satisfy
\begin{equation}
  \Span_{\F_2}S_Z=\F_2^n.
  \label{eq:diagonal-spanning-criterion}
\end{equation}
\end{theorem}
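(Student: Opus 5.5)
After the Clifford reduction in Eq.~\eqref{eq:commuting-diagonal-form}, I would work entirely with diagonal operators. The goal is to show that the spanning condition \eqref{eq:diagonal-spanning-criterion} fails exactly when $U\mathcal V_TU^\dagger$ contains a one-sided-inertia operator, and exactly when it contains a rank-two operator.

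\emph{Step 1 (diagonal description).} Every element of $U\mathcal V_TU^\dagger$ has the form $D=\sum_x f(x)|x\rangle\langle x|$. Here $f:\F_2^n\to\R$ is a function whose Walsh--Fourier coefficients $\hat f(u)=2^{-n}\sum_x f(x)(-1)^{u\cdot x}$ vanish for every $u\in S_Z$, including $u=0$. This follows from Eq.~\eqref{eq:diagonal-pauli-outcomes} and Pauli orthogonality. Conversely, every such $f$ gives a kernel element, because the diagonal Pauli strings form a basis of the diagonal operators. The eigenvalues of $D$ are just the values $f(x)$, so ranks and inertia become counts of the signs of $f$.

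\emph{Step 2 (spanning fails $\Rightarrow$ rank two).} Suppose $\Span_{\F_2}S_Z\neq\F_2^n$. Then there is a nonzero $y\in\F_2^n$ with $u\cdot y=0$ for every $u\in S_Z$. Take $f=\delta_0-\delta_y$, that is, $D=|0\rangle\langle0|-|y\rangle\langle y|$. Its coefficients are $\hat f(u)=2^{-n}(1-(-1)^{u\cdot y})$, which vanish on $S_Z$. So $D$ is a nonzero rank-two kernel operator, which gives a pure--pure ambiguity and hence a one-sided-inertia operator. Conjugating back by $U^\dagger$ gives the corresponding statement for $\mathcal V_T$.

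\emph{Step 3 (one-sided inertia $\Rightarrow$ spanning fails).} This is the main step. By replacing $f$ with $-f$ if necessary, I may assume $f$ has exactly one positive value, at a point $x_0$. By applying the Pauli $X^{x_0}$, I would translate so that $x_0=0$. This translation is a Clifford that fixes the diagonal labels up to signs $(-1)^{u\cdot x_0}$, so the support $T_Z$ is preserved. Now $f(0)>0$, $f\le0$ elsewhere, and $\hat f$ vanishes on $S_Z$.

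The natural move is to use the fact that the set $\{x:\hat g\text{ supported off }S_Z\}$ is the orthogonal complement of the span of the characters $\{(-1)^{u\cdot x}:u\in S_Z\}$. Suppose $\Span S_Z=\F_2^n$. I want to derive a contradiction. The cleanest route I would try is a positivity argument. Let $W=\Span_{\F_2}S_Z$, suppose $W=\F_2^n$, and exploit the fact that $S_Z$ is the retained set.

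The difficulty is that $S_Z$ itself need not be a subspace. So the test functions available are the individual characters $\chi_u$ for $u\in S_Z$, not their products. For each $u\in S_Z$ we have $0=\sum_x f(x)\chi_u(x)=f(0)+\sum_{x\neq0}f(x)\chi_u(x)$, and also $\sum_x f(x)=0$. Subtracting gives
\[
0=\sum_{x\neq0}|f(x)|\bigl(1-\chi_u(x)\bigr)=2\sum_{x\neq0:\,u\cdot x=1}|f(x)|.
\]
Hence every $x\neq0$ with $f(x)\neq0$ satisfies $u\cdot x=0$ for all $u\in S_Z$, so $x\in S_Z^{\perp}=(\Span S_Z)^\perp=\{0\}$. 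That is a contradiction, since $f\neq0$ requires some negative value. So this step works after all. The only care needed is checking that translation by $x_0$ keeps $S_Z$ fixed, and it does, because the character signs are simply absorbed into the coefficients.

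\emph{Step 4 (assembly).} Steps 2 and 3 give the equivalences one-sided inertia $\Leftrightarrow$ spanning fails $\Leftrightarrow$ nonzero rank two. Every nonzero rank-two kernel operator is traceless with eigenvalues $\lambda,-\lambda$, so it has one-sided inertia, which closes the loop. Combined with the kernel criteria \eqref{eq:udp-kernel} and \eqref{eq:uda-kernel}, and the invariance of the kernel under $U$, this shows that universal UDP and universal UDA are equivalent and hold exactly under \eqref{eq:diagonal-spanning-criterion}.
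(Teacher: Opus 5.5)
Your proposal is correct and is essentially the paper's proof: your identity $\sum_{x\neq0}|f(x)|\,(1-\chi_u(x))=0$ is the paper's extremal-average argument, which shows that every basis state carrying negative weight shares all retained parities with $x_0$, and the rank-two witness is the same kind of projector difference. The only difference is cosmetic: you reach the rank-two operator by first deducing that the spanning condition fails (after a harmless but unnecessary translation $x_0\mapsto 0$), whereas the paper builds $|x_0\rangle\langle x_0|-|y\rangle\langle y|$ directly from any $y$ with $p_y>0$.
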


\begin{proof}
We work in the diagonal coordinates of
Eq.~\eqref{eq:commuting-diagonal-form}. Suppose that $\mathcal V_T$ contains a nonzero
operator $A$ with exactly one positive or exactly one negative eigenvalue.
Replacing $A$ by $-A$ if necessary, we may assume that it has exactly one
positive eigenvalue. Since $A$ is diagonal and traceless, it takes
the form
\begin{equation}
  \begin{gathered}
    A=\lambda\left(|x_0\rangle\langle x_0|
      -\sum_{x\neq x_0}p_x|x\rangle\langle x|\right),\\
    \lambda>0,\qquad p_x\geq0,\qquad
    \sum_{x\neq x_0}p_x=1.
  \end{gathered}
  \label{eq:diagonal-one-sided}
\end{equation}
This operator represents an ambiguity between a single computational
basis state and a mixture of other basis states. For every
$u\in S_Z$, Pauli orthogonality gives $\Tr(AZ^u)=0$, and hence
\begin{equation}
  (-1)^{u\cdot x_0}
    =\sum_{x\neq x_0}p_x(-1)^{u\cdot x}.
  \label{eq:deterministic-parity-average}
\end{equation}
The left-hand side equals $+1$ or $-1$, whereas the right-hand side is a weighted average of values in $\{+1,-1\}$. Such an average
attains an extremal value only if every term
with positive weight has
that value. Therefore every $x$
with $p_x>0$ satisfies
\begin{equation}
  (-1)^{u\cdot x}=(-1)^{u\cdot x_0},
  \qquad \forall\,u\in S_Z.
  \label{eq:matching-parities}
\end{equation}

Now choose any $y\neq x_0$ with $p_y>0$ and set
$B=|x_0\rangle\langle x_0|-|y\rangle\langle y|$.
To verify that this rank-two operator lies in the omitted span, we
expand it in the diagonal Pauli basis:
\begin{equation}
  B=\frac1{2^n}\sum_{u\in\F_2^n}
    \left[(-1)^{u\cdot x_0}-(-1)^{u\cdot y}\right]Z^u.
  \label{eq:diagonal-rank-two-expansion}
\end{equation}
By Eq.~\eqref{eq:matching-parities}, every coefficient with
$u\in S_Z$ vanishes. Hence $B\in\Span_{\R}\{Z^u:u\in T_Z\}$,
and conjugating back gives $U^\dagger B U\in\mathcal V_T$. The only
nonzero eigenvalues of this operator are $+1$ and $-1$, and the two indistinguishable
pure states in the original coordinates are $U^\dagger|x_0\rangle$
and $U^\dagger|y\rangle$. Conversely, every nonzero traceless
Hermitian rank-two operator has exactly one eigenvalue of each sign. Together with the kernel criteria in Eqs.~\eqref{eq:udp-kernel} and \eqref{eq:uda-kernel}, this establishes UDP--UDA equivalence.

The same argument yields the exact spanning criterion. If $S_Z$
spans $\F_2^n$, Eq.~\eqref{eq:matching-parities} forces
$u\cdot(y+x_0)=0$ for a spanning set of labels $u$, so $y=x_0$.
This contradicts the
choice of $y$; hence no one-sided inertia
obstruction exists, and the scheme is UDA. Conversely, suppose that $S_Z$
does not span $\F_2^n$,
and choose
\begin{equation}
  0\neq v\in\bigl(\Span_{\F_2}S_Z\bigr)^\perp,
  \label{eq:undetected-binary-direction}
\end{equation}
where the orthogonal complement is taken with respect to the ordinary binary dot product.
The distinct basis states $|0^n\rangle$ and $|v\rangle$ then give the
same outcomes for all retained diagonal Pauli observables. Setting $x_0=0$ and $y=v$
in Eq.~\eqref{eq:diagonal-rank-two-expansion} shows that their
projector difference is a rank-two kernel operator in these
coordinates. The scheme therefore fails UDP, which proves
Eq.~\eqref{eq:diagonal-spanning-criterion}.
\end{proof}

Physically, a mixture can reproduce the deterministic diagonal
Pauli outcomes of a basis state only if every state in
the mixture reproduces them individually. Any pure--mixed ambiguity
therefore
gives rise to a pure--pure ambiguity. The spanning condition specifies precisely when the retained parity measurements distinguish
all computational basis states.

\subsection{Eigenvalue multiplicities and coset ranks}
\label{subsec:commuting-ranks}

The diagonal picture also yields rank constraints and explains the eigenvalue
multiplicities observed in the four-qubit examples. An immediate consequence
of Theorem~\ref{thm:commuting} is that the
minimum nonzero rank in a commuting Pauli span cannot equal three.
Indeed, a traceless Hermitian rank-three
operator has
inertia $(1,2)$ or $(2,1)$, so the theorem supplies a rank-two operator
in the same span. More generally, suppose that the binary span of $T$ has dimension $k<n$. A Clifford unitary can then place its generators on the first $k$ qubits, and every kernel operator takes the form
\begin{equation}
  U A U^\dagger=D\otimes I_{2^{n-k}},
  \label{eq:commuting-spectator-form}
\end{equation}
where $D$ is diagonal and traceless, with diagonal entries constrained by the omitted support. Each eigenvalue of $D$ thus
occurs $2^{n-k}$ times in $A$. For nonzero $A$, both signs occur, each with multiplicity at least two, so the complementary measurement
scheme
for every such support is UDA. When $k=n$, there is no
automatic
multiplicity, and uniqueness is decided by
Eq.~\eqref{eq:diagonal-spanning-criterion}.

A related normal form determines all possible ranks within the span of a Pauli-group
coset. Let $\mathcal G$ be the abelian subgroup of the Pauli group
generated by $r$ commuting Hermitian Pauli operators with
independent binary labels. Every element of $\mathcal G$ is Hermitian,
$|\mathcal G|=2^r$, and $-I\notin\mathcal G$. Let $q$ be a Hermitian
Pauli operator that commutes with every element of $\mathcal G$ and whose label
lies outside the binary span of the generators.
The
resulting $r+1$ labels are independent and commuting, so $r+1\leq n$,
and a Clifford unitary maps them, up to signs, to $Z_1,\ldots,Z_r,Z_{r+1}$.
Every operator in the real span of the coset
$q\mathcal G=\{qg:g\in\mathcal G\}$
therefore takes the form
\begin{equation}
  U A U^\dagger=D\otimes Z\otimes I_{2^{n-r-1}},
  \label{eq:coset-diagonal-form}
\end{equation}
where $D$ is now an arbitrary real diagonal $2^r\times2^r$ matrix,
since the products of $Z_1,\ldots,Z_r$ form a basis of the
diagonal operators on those $r$ qubits.

Each nonzero diagonal entry $b$ of $D$ contributes the eigenvalues
$b$ and $-b$ to the spectrum, each repeated $2^{n-r-1}$ times. If exactly $j$
diagonal entries are nonzero, then
\begin{equation}
  \begin{gathered}
    \rank A=j\,2^{n-r}=j\,\frac d{|\mathcal G|},\\
    n_+(A)=n_-(A)=j\,2^{n-r-1},
    \qquad 1\leq j\leq|\mathcal G|.
  \end{gathered}
  \label{eq:coset-ranks}
\end{equation}
Since every value of $j$ in this range is attainable, the minimum
nonzero rank is $d/|\mathcal G|$, and every element has balanced positive
and negative inertia.

The four-operator support $\{q,qg_1,qg_2,qg_1g_2\}$,
where $q,g_1,g_2$ are independent commuting generators, corresponds to
$r=2$. Up to
Pauli signs, this is the support of the operator in Eq.~\eqref{eq:fourset-factor}.
Equation~\eqref{eq:coset-ranks} recovers the rank in
Eq.~\eqref{eq:fourset-rank} and, moreover, controls the entire four-operator
span: every nonzero element has at least $2^{n-3}=d/8$
eigenvalues of each sign. For $n\geq4$, the complementary
measurement scheme is therefore UDA. This result extends the explicit
four-qubit calculation in Eq.~\eqref{eq:spectator-inertia} to
arbitrary commuting zero-sum four-sets and all $n\geq4$.

\section{Discussion and outlook}
\label{sec:discussion}

The quartic certificate connects the structure of the omitted Pauli
observables to the identification of every pure state, even against
mixed-state alternatives. Four-set-freeness is sufficient for
universal UDA whenever $n\geq3$, and it yields an exact criterion for both UDP and UDA at
three qubits. The dimension dependence has a physical origin: the
pure-state population difference hidden by a commuting four-set at
three qubits becomes a difference between populations of degenerate
subspaces at larger $n$. Omitting such a four-set alone then preserves
UDA, so its presence no longer suffices to decide uniqueness.

The central open question is whether UDP implies UDA for every
Pauli measurement scheme. For $n\geq4$, any counterexample must
have a noncommuting omitted support that contains a commuting zero-sum
four-set, since Theorem~\ref{thm:master} excludes four-set-free
supports
and Theorem~\ref{thm:commuting} excludes commuting ones.
Physically, the question is whether the retained Pauli data can
exclude every competing pure state while still admitting a mixed
state. Whether the
ambiguity associated with a four-set can take this
form depends on the other omitted observables.

A natural next step is to develop stronger structural criteria for these
remaining supports that incorporate eigenvalue multiplicities and the
effect of additional noncommuting observables. Such criteria would
extend the finite four-set test into a broader tool for selecting and certifying
measurement schemes. A complementary question is whether the
quartic lower bound in Eq.~\eqref{eq:quartic-lower-bound} can yield
quantitative bounds on
reconstruction errors when the measured expectation values are subject
to uncertainty. Establishing such a connection would relate the
exact uniqueness guarantees to stability under measurement noise.

\bibliography{refs}

\end{document}